\documentclass[12pt,openany]{article}
\usepackage[margin=1.4in]{geometry}

\usepackage{tocloft}
\cftsetrmarg{7em} 


\usepackage{authblk}


\usepackage[printonlyused]{acronym}

\acrodef{bdb}[BDB]{Beyond Design Basis}
\acrodef{cap}[CAP]{Condition Adverse to Quality Program}
\acrodef{cba}[CBA]{Cost-Benefit Analysis}
\acrodefplural{cba}[CBAs]{Cost-Benefit Analyses}
\acrodef{cdf}[CDF]{Core Damage Frequency}
\acrodef{chrs}[CHRS]{Containment Heat Removal System}
\acrodef{crmp}[CRMP]{Comprehensive Risk Management Process}
\acrodefplural{crmp}[CRMPs]{Comprehensive Risk Management Processes}
\acrodef{db}[DB]{Design-basis}
\acrodef{dba}[DBA]{Design Basis Accident}
\acrodef{did}[DID]{Defense-in-Depth}
\acrodef{eccs}[ECCS]{Emergency Core Cooling System}
\acrodef{esd}[ESD]{Event Sequence Diagram}
\acrodef{flc}[FC]{Fails Closed}
\acrodef{fmea}[FMEA]{Failure Modes and Effects Analysis}
\acrodefplural{fmea}[FMEA]{Failure Modes and Effects Analyses}
\acrodef{fsar}[FSAR]{Final Safety Analysis Report}
\acrodef{fs}[FoS]{Factor of Safety}
\acrodefplural{fs}[FsoS]{Factors of Safety}
\acrodef{ft}[FT]{Fault Tree}
\acrodef{fts}[FTS]{Fail To Start}
\acrodef{ftr}[FTR]{Fail To Run}
\acrodef{gra}[GRA]{Generation Risk Assessment}
\acrodef{lerf}[LERF]{Large Early Release Frequency}
\acrodef{mss}[MSS]{Main Steam System}
\acrodef{mslb}[MSLB]{Main Steam Line Break}
\acrodef{nei}[NEI]{Nuclear Energy Institute} 
\acrodef{npv}[NPV]{Net Present Value}
\acrodef{npp}[NPP]{Nuclear Power Plant}
\acrodef{nrc}[NRC]{Nuclear Regulatory Commission}
\acrodef{oqap}[OQAP]{Operations Quality Assurance Program}
\acrodef{lar}[LAR]{License Amendment Requests}
\acrodef{lb}[LB]{Licensing Basis}
\acrodef{lerf}[LERF]{Large Early Release Frequency}
\acrodef{loca}[LOCA]{Loss of Coolant Accident}
\acrodef{lwr}[LWR]{Light Water Reactor}
\acrodef{om}[O\&M]{Operations and Maintenance}
\acrodef{ora}[ORA]{Organizational Risk Assessment}
\acrodef{pga}[PGA]{Peak Ground Acceleration}
\acrodef{pra}[\emph{PRA}]{Probabilistic Risk Assessment}
\acrodef{pwr}[PWR]{Pressurized Water Reactor}
\acrodef{rcb}[RCB]{Reactor Containment Building}
\acrodef{rcd}[RCD]{Reactor Core Damage}
\acrodef{rcs}[RCS]{Reactor Coolant System}
\acrodef{rr}[RR]{Radiation Release}
\acrodef{ssc}[SSC]{Systems, Structures, and Components}
\acrodef{stm}[STM]{State Transition Matrix}
\acrodefplural{stm}[STMs]{State Transition Matrices}
\acrodef{ufsar}[UFSAR]{Updated Final Safety Analysis Report}
%

\usepackage[font={small},labelfont=bf]{subcaption}
\usepackage[font={small},labelfont=bf]{caption} 

\usepackage{framed}

\usepackage{amsfonts}
\usepackage{amsmath,amssymb,gensymb}
\usepackage{amsthm}
\usepackage{epstopdf}
\usepackage{cancel}

\usepackage{graphicx,xcolor}
\DeclareGraphicsRule{.tif}{png}{.png}{`convert #1 `dirname #1`/`basename #1 .tif`.png}
\usepackage{subfloat}

\usepackage{booktabs}
\usepackage{multirow}
\usepackage{threeparttablex}
\usepackage{longtable}



\usepackage{enumitem}
\newlist{longenum}{enumerate}{5}
\setlist[longenum,1]{label=\Roman*)}
\setlist[longenum,2]{label=\Alph*)}
\setlist[longenum,3]{label=\roman*)}
\setlist[longenum,4]{label=\alph*)}
\setlist[longenum,5]{label=(\alph*)}
\setlist[longenum,6]{label=(\roman*)}

\title{%
Characterizing the Probability Law on Time Until Core Damage With \emph{PRA}:\\
\begin{large}
Consequences of Assuming Poisson Initiating Event Processes
\end{large} %
}
\author[]{Martin Wortman, Ernest Kee, \& Paul Nelson}
\affil[]{Texas A\&M University}

\usepackage{datetime}
\date{\today : \currenttime}                                           

\usepackage{comment}

\PassOptionsToPackage{hyphens}{url}\usepackage[hidelinks,pagebackref]{hyperref}
\usepackage{footnotebackref}

\usepackage{cleveref}


\newtheorem*{proposition*}{Proposition}
\newtheorem*{remark*}{Remark}

\usepackage{setspace}


\usepackage{natbib}
\bibliographystyle{chicago}

\begin{document}

\maketitle

\section*{Summary}

	Certain modeling assumptions underlying \ac{pra} allow a simple computation of core damage frequency (\emph{CDF}).  These assumptions also guarantee that the time remaining until a core damage event follows an exponential distribution having parameter value equal to that computed for the \emph{CDF}. While it is commonly understood that these modeling assumptions lead to an approximate characterization of uncertainty, we offer a simple argument that explains why the resulting exponential time until core damage distribution under--estimates risk.  Our explanation will first review operational physics properties of hazard functions, and then offer a non--measure--theoretic argument to reveal the the consequences of these properties for \emph{PRA}.\footnote{The conclusions offered, here, hold for any possible operating history that respects the underlying assumptions of \emph{PRA}.  Hence, the measure--theoretic constructs on filtered probability spaces is unnecessary for our developments.} We will then conclude with a brief discussion that connects intuition with our analytical development.
	
	\pagebreak
	
\section{Risk and Rational Hazard}
	A nuclear reactor can experience at most one core damage event. This catastrophic terminus is the central outcome with respect to which we shall characterize risk. In particular, we recognize the probability law on the time until core damage as a finite--characterization of risk.  To this end, \emph{Probabilistic Risk Assessment} (\emph{PRA}) offers a means for computing the probability distribution on the time until core damage. The \emph{PRA} methodology relies on certain assumptions that may or may not agree with physics characterizing reactor operations. When these assumptions are not in harmony with the underlying operational physics, the \emph{PRA} assessment of risk becomes (at best) approximate; the quality of the approximated probability law yielded by \emph{PRA} remains an open question.
	
	In the development that follows, we will identify physical and operational features that are generally shared by any maintained system (\emph{e.g., a nuclear reactor}) that can potentially succumb to a catastrophic failure. We then review the analytical assumptions underlying \emph{PRA} and rationalize these assumptions with the previously identified physical and operational features. We will then identify direct implications on risk (\emph{i.e.}, the probability distribution on time until core damage) when the assumptions underlying \emph{PRA} cannot be completely rationalized.  We will conclude that \emph{PRA} generally underestimates risk.  Finally, we cite some well--known results that help reveal the implicit difficulties in quantifying the quality of \emph{PRA} generated approximations of risk ... which remains a very important open research question.
\bigskip
	
\textbf{Physical and Operational Features of Maintained Systems:}
A maintained system can be described, ``\emph{Any system where restorative action is taken to lessen the likelihood of failure}." We shall characterize the likelihood of failure in the usual probabilistic sense as $R(t) \triangleq P(T>t)$, with the random variable $T$ being the time of first system failure after time $t=0$. Thus, we accept that $T$ is a continuous nonnegative random variable. We note that, the likelihood of failure is, thus, parametrically characterized by the hazard function $h(t)$, where

\begin{equation}
	R(t) = e^{-\int_0^t h(s)ds}
	\label{reliability}
\end{equation}
where,
 	$$h(t) \triangleq \lim_{s \downarrow 0}\frac{1}{s} P(T \le t+s | T >t).$$

It follows that by differentiating \eqref{reliability} that 
$$h(t) = \frac{-dR(t)}{R(t)}.$$

Consider \Cref{fig:hazard} showing a hypothetical hazard function behavior following maintenance interventions (epochs) that result in discontinuities; we do not require that $h(t)$ be continuous.\footnote{The hazard $h(t)$, \emph{i.e.,} the propensity for first system failure in the next instant of time, will surely be discontinuous with downward jumps at epochs of successful maintenance.}
Importantly, note that $h(t)$ and $R(t)$ carry exactly the same information (as one is uniquely determined by the other). This information includes a given (fixed) history of maintenance prior to the epoch of first system failure. For any given maintenance history, there will correspond a unique hazard trajectory $h(t)$, $t \ge 0.$ Further, operational physics requires that any fixed hazard trajectory $h(t)$ should have the following properties that we identify as \emph{principles of rational hazard}.
\bigskip

\textbf{Principles of Rational Hazard}
\begin{enumerate}
	\item $0 < h(t) < \infty$, $\forall t \ge0$; hazard is positive and finite.
	\item $\lim_{s \downarrow 0}h(t+s) = h(t)$, $\forall t,s \ge 0$; hazard is right--continuous.
	\item $\lim_{s \downarrow 0}\frac{h(t+s) - h(t)}{t - s} \ge 0$, $\forall t\ge0$; hazard is non--decreasing almost everywhere.
   \item $\lim_{s \downarrow 0}h(t) - h(t-s) < 0$; hazard decreases only at each of the (at most countable number of) epochs of successful maintenance.
	\item $h(0) = \inf_{t \ge0} h(t)$; without loss of generality, the system is as \emph{good--as--new} at time zero.
	
\end{enumerate}
\begin{figure}[h]
	\centering
	\includegraphics[width=0.8\textwidth]{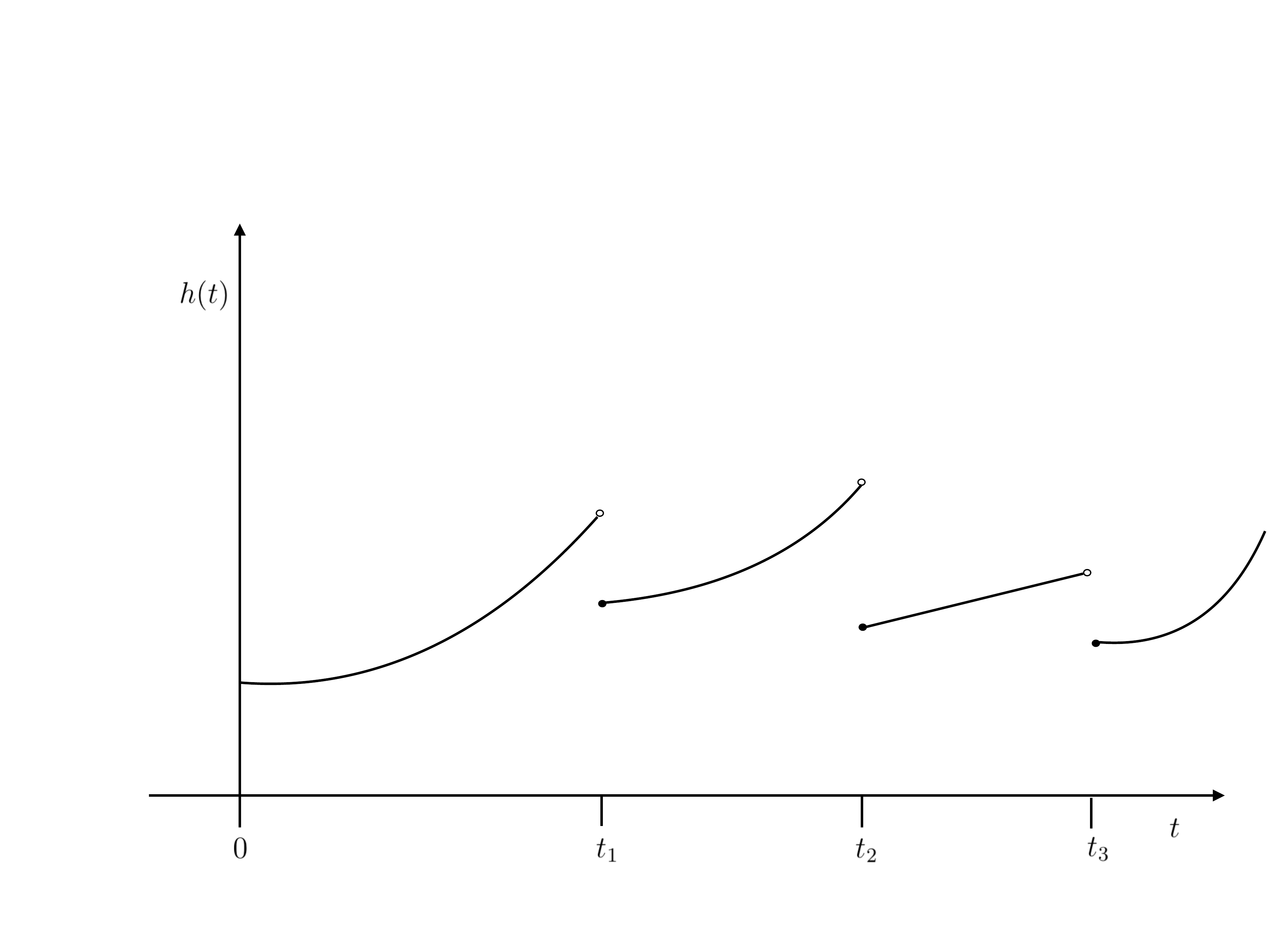}
	\caption{Trajectory of a hypothetical hazard for a fixed degradation and maintenance.}
	\label{fig:hazard}
\end{figure}

Returning, now, to the characterization of general maintained systems, we introduce the follow proposition. 

    \begin{proposition*}
    Maintained systems satisfying the principles of rational hazard have a time of first failure complementary distribution $R(t)$ that is stochastically ordered with respect to $e^{-h(0)t}$, $t \ge 0.$
    \end{proposition*}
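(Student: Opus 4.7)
The plan is to reduce the claim to a pointwise inequality on the cumulative hazard and then invoke monotonicity of the exponential. Concretely, I would argue that under the principles of rational hazard the mapping $t \mapsto R(t)$ is dominated pointwise by $t \mapsto e^{-h(0)t}$, which is precisely the statement that the exponential random variable with rate $h(0)$ stochastically dominates $T$.

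First I would use Principle 5, which asserts $h(0) = \inf_{t\ge 0} h(t)$, to obtain the pointwise bound $h(s) \ge h(0)$ for every $s \ge 0$. Note that Principles 3 and 4 together already guarantee that the trajectory $h(\cdot)$ can only decrease at a (at most countable) set of maintenance epochs and is otherwise non--decreasing; Principle 5 then pins the infimum at the left endpoint. Integrating the pointwise bound from $0$ to $t$ gives
\[
\int_0^t h(s)\, ds \;\ge\; \int_0^t h(0)\, ds \;=\; h(0)\, t.
\]
Principles 1 and 2 are used only to ensure the integrand is finite and the integral is well defined (e.g., as a Lebesgue integral over the right--continuous trajectory described in the footnote).

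Next I would substitute this inequality into the representation \eqref{reliability}. Since $x \mapsto e^{-x}$ is strictly decreasing,
\[
R(t) \;=\; \exp\!\left(-\int_0^t h(s)\, ds\right) \;\le\; \exp\!\bigl(-h(0)\, t\bigr), \qquad t \ge 0.
\]
Recognizing the right--hand side as the complementary distribution function of an exponential random variable $T^{\star}$ with rate $h(0)$, this says $P(T>t) \le P(T^{\star}>t)$ for all $t \ge 0$, which is exactly the definition of stochastic ordering $T \le_{\mathrm{st}} T^{\star}$.

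I do not anticipate a serious analytical obstacle: given the five principles the argument is essentially a one--line integration. The only subtlety worth flagging is interpretive, namely that Principle 5 must be read as a genuine global infimum over the entire post--maintenance trajectory (not merely a local statement at $t=0$); otherwise a future restorative action could in principle drive $h$ below $h(0)$ and the domination would fail. Once that is acknowledged, the inequality $h(s) \ge h(0)$ holds everywhere and the rest of the proof is immediate.
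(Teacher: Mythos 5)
Your proposal is correct and follows essentially the same route as the paper's proof: bound the cumulative hazard below by $h(0)\,t$ using Principle 5 (the infimum property), then exponentiate via \eqref{reliability} to get $R(t) \le e^{-h(0)t}$, which is the stated stochastic ordering. The only cosmetic difference is that you integrate the pointwise bound $h(s) \ge h(0)$ directly, while the paper first notes integrability and monotonicity of the cumulative hazard (Principles 1--4) before invoking the infimum; the substance is identical.
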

    
    \begin{proof}
    By eq\eqref{reliability} the complimentary distribution of time of first failure is given by $R(t) = e^{-\int_0^t h(s)ds},$ $t \ge 0.$ Appealing to the Principles of Rational Hazard, it follows from principles 1 and 2 that $h(t)$ is integrable on $t \ge 0.$ It then follows from principles 3 and 4 that $\int_0^th(s)ds$ must be positive, finite, and monotone nondecreasing. Thus, $$\{\inf_{t \ge 0} h(t)\}t \le \int_0^t h(s)ds,$$ and by principle 5, $$h(0)t \le \int_0^t h(s)ds.$$  Exponentiating now shows that for every possible trajectory of rational hazard
    $$e^{-h(0)t} \ge e^{-\int_0^t h(s)ds},$$ or equivalently, $$P(T \le t) \ge 1 - e^{-h(0)t}, \forall t \ge 0.$$
    \end{proof}

    We now observe that since time $t=0$ can, without loss of generality, be set to any maintenance epoch that returns the system to a \emph{good--as--new} condition, maintained systems that conform to the principles of rational hazard will have a time of first failure distribution (for all possible maintenance trajectories) that is bounded above by an exponential distribution with parameter equal to the initial hazard.
    
    \begin{remark*}
    The above proposition, unto itself, is not especially noteworthy, since intuition easily suggests exactly what proof formalizes: Disregarding the effects of degradation on a maintained system will surely lead to under--estimating failure risk. However, we can leverage this proposition to obtain deeper insights regarding the consequences of Poisson assumptions that underlie PRA. 
    \end{remark*}

\section{Approximations Using \emph{PRA} } 
The general stochastic processes governing a maintained system's temporal behavior prior to core damage are so complicated as to go beyond practical computational analysis.  Hence, simplified models are employed that yield approximate characterizations of core damage risk: \emph{PRA} is the most commonly used methodology to quantify core damage risk.
Essential to the \emph{PRA} methodology is that the first (and only) core damage events arrives according to a Poisson process.\footnote{Although other reactors at large commercial multi-unit sites may continue to operate (for example, the Three Mile Island site experience and the Chernobyl experience), the reactor(s) involved in the accident is(are) have historically been decommissioned. An exception may be Fukushima where three reactors avoided core damage but have not yet been restarted.}  An analytical consequence of Poisson arrivals is that the time until the first (and only) core damage event follows an exponential distribution, and the parameter of this distribution takes a value identical to what is termed core damage frequency (\emph{CDF}).\footnote{Clearly, \emph{CDF} does not imply that there exists multiple core damage events; the (assumed Poisson) arrival process of core damage terminates upon a damage epoch.  Thus, it is well understood that \emph{CDF} simply refers to expected number of arrivals per-unit time of the un-terminated Poisson model characterizing the time of the first (and only) core damage event.} A specific value of {CDF} is computed using the \emph{PRA} predictive modeling methodology and noting that

    $$h = CDF = \frac{1}{E[T]},$$
we have that

\begin{equation}
	R(t) = P(T>t) = e^{-ht}, t \ge 0.
	\label{exponential} 
\end{equation}

Clearly, when core damage is approximated as a Poisson arrival, the practical effects of degradation and maintenance are not represented.  That is, Poisson arrivals assume that the system hazard remains constant over all time.\footnote{Note that the constant hazard, characteristic of eq\eqref{exponential}, satisfies each of the five principles of rational hazard.} Hence, by \emph{PRA} approximation, a system remains in the \emph{good--as--new} condition until the occurrence of core damage.

Practitioners, of course, recognize that the constant hazard trajectories characteristic of \emph{PRA} are inconsistent with reality. Thus, the quality of core damage risk characterization of eq\eqref{exponential} is of central importance.  To this end, we note that all stochastic approximations can be classified according to one of two properties: 1) the approximation provides a bound, or 2) the approximation does not provide a bound.  Bounding approximations are typically ordinal, giving a means to gauge true system performance as being ``at least as good as" or ``never any worse than" characterizations.  Non-bounding approximations typically rely on a distance metric, where system performance can be gauged as within some quantifiable (computable) distance from the approximation.

Under the principles of rational hazard and the concomitant stochastic ordering proposition, the \emph{PRA} approximation of eq\eqref{exponential} represents an uninformative lower bound on core damage risk. That is: ``The risk of core damage by time~$t$ will always be higher than that predicted by the \emph{PRA} approximation."\footnote{The uninformative lower bound on core damage risk is an obvious consequence of Poisson modeling, where the effects of degradation and maintenance are not captured.}

In order to improve the quality of the \emph{PRA} approximation one could seek to relax one or more of the rational hazard principles with the objective of finding either a better bound or a quantifiable distance metric between the \emph{PRA} approximation and stochastic system behavior. However, it is unlikely that the \emph{PRA} approximation can be used to obtain a better bound or quantify distance to stochastic system behavior without a more detailed characterization of degradation and maintenance. Including degradation and maintenance information (\emph{e.g.,} histories of equipment failure and repair) reveals the difficulties in modeling hazard processes which are generally represented as the stochastic intensity of certain underlying marked--point processes. 

Finally, we note that there exists a large and established literature reporting results on approximating general stochastic point processes by Poisson processes. This line of investigation largely centers on ``Stein's Method" (also referred to as the ``Stein--Chen" method) to develop bounds on distance metrics characterizing how closely a Poisson process approximates a stationary point process. The works of \cite{Barbour1989}, \cite{Chen2004}, and \cite{Chatterjee2005} and their references provide a fairly complete coverage of the state--of--the--art in Poisson approximations. While these bounded distance metrics provide useful devices in proving certain contraction properties associated with weak--convergence arguments for limit theorems, none lends itself to direct quantification. Only incremental progress along this line of investigation has been made in recent years.

\bibliography{PRA.bib}
\end{document}